\newcommand\mathcircled[1]{%
  \mathpalette\@mathcircled{#1}%
}
\newcommand\@mathcircled[2]{%
  \tikz[baseline=(math.base)] \node[draw,circle,inner sep=1pt] (math) {$\m@th#1#2$};%
}
\newcommand{\Scal}S
\theoremstyle{plain}
\newtheorem{thm}{Theorem}[section]
\newtheorem{example}[thm]{Example}
\newtheorem{prop}[thm]{Proposition}
\newtheorem{remark}[thm]{Remark}
\def\d{{\rm d}}
\title[Charges, complex structures, and perturbations]{Charges, complex structures, and perturbations of instantons}
\author[L. Andersson]{Lars Andersson}
\address[Lars Andersson]{University of Potsdam, 14476 Potsdam, Germany}
\email{lars.andersson@uni-potsdam.de}
\author[B. Araneda]{Bernardo Araneda}
\address[Bernardo Araneda]{School of Mathematics and Maxwell Institute for Mathematical Sciences, University of Edinburgh, EH9 3FD, United Kingdom,  \newline
Max-Planck-Institut f\"ur Gravitationsphysik (Albert-Einstein-Institut), 
Am M\"uhlenberg 1, D-14476 Potsdam, Germany}
\email{baraneda@ed.ac.uk, bernardo.araneda@aei.mpg.de}
\numberwithin{equation}{section}
\begin{document}

\begin{abstract}
Hermitian non-K\"ahler Einstein 4-manifolds have a quasi-locally conserved charge associated to spin-lowering via Killing spinors, and corresponding to a parameter of the moduli space. This charge is evaluated for all explicitly known examples of gravitational instantons. 
Generic gravitational perturbations are shown to admit a closed 2-form that measures the perturbation to this charge, generalizing previous Lorentzian results on the linearized mass of perturbed Kerr black holes.
\end{abstract} 

\date{\today}
\maketitle

\section{Introduction}

In Euclidean quantum gravity, the dominant contribution to the gravitational path integral is expected to come from metrics near the minima of the Euclidean action \cite{GibbonsHawkingPerry}, that is from gravitational instantons and small deformations of them. In the one-loop expansion, deformations of a classical solution $(M,g_{ab})$ represent quantum corrections \cite{GibbonsPerry}, and in the case of perturbations satisfying the linearized Einstein equations, they provide information about the moduli space $\mathscr{E}(M)$ of  instantons, describing the structure of the tangent space of $\mathscr{E}(M)$ at $g_{ab}$ and allowing to analyze questions such as integrability, infinitesimal rigidity \cite{Andersson:2024wtn, AA} and spectral stability (e.g. \cite{dahlkroncke, BiquardOzuch}); cf. \cite{Besse} for the compact case. 

The study of gravitational instantons is further motivated by the recently discovered Chen-Teo instanton \cite{ChenTeo1}, a Ricci-flat, asymptotically flat (AF) solution that gives a counter-example to the Euclidean black hole uniqueness conjecture \cite{Gibbons1980, Gibbons:1979xm} and 
contributes, in addition to flat space and to the Schwarzschild and Kerr instantons, to the possible semi-classical states of the gravitational field with AF boundary conditions \cite{Lapedes:1980st}. 
Recent work by Li and Sun \cite{LiSun} shows the existence of a new infinite family of AF gravitational instantons,
and the classification problem remains largely open.
Previous classification results include \cite{Biquard:2021gwj, Aksteiner:2023djq, Li:2023jlq}.

For the vacuum Einstein equations with or without cosmological constant, 
boundary conditions of interest in quantum gravity include, in addition to AF, asymptotcally locally flat (ALF),
asymptotically locally Euclidean (ALE), and compact manifolds. All explicitly known instantons are (at least locally\footnote{Our definition of instantons here allows a non-vanishing cosmological constant. This includes the 4-sphere $S^4$, which does not admit a global complex structure.}) Hermitian \cite{Aksteiner:2021fae}. 
Furthermore, they are conformally K\"ahler \cite{Derdzinski, Lebrun95}. This means that there is a metric $\hat{g}_{ab}=\Omega^2 g_{ab}$ 
and a self-dual 2-form $\hat\omega_{ab}$ which is parallel under the Levi-Civita connection $\hat\nabla_{a}$ of $\hat{g}_{ab}$,
\begin{align}\label{Kahler}
 \hat\nabla_{a}\hat\omega_{bc} = 0.
\end{align}

In this paper, we shall see that for Hermitian instantons, if the part of the Weyl tensor that has the same orientation as the complex structure ($\Psi_{ABCD}$ in 2-spinor notation, cf. section \ref{sec:preliminaries} for conventions) is non-vanishing, then one of the moduli corresponds to a special ``charge'' associated to the gravitational field. This is because the K\"ahler form is actually the spin-lowered Weyl spinor
\begin{align}\label{KahlerForm0}
 \hat\omega_{ab} = \Psi_{ABCD}K^{CD}\epsilon_{A'B'},
\end{align}
where $K^{AB}$ is a real solution to the 2-index twistor equation, see section \ref{sec:IdentityForCharges} below. The gravitational charges referred to above are then the periods of the closed 2-form \eqref{KahlerForm0} over 2-cycles $S$ in $M$,
\begin{align}\label{charge}
 Q[S] = \int_{S}\hat\omega.
\end{align}
Since $\d\hat\omega=0$, \eqref{charge} is invariant under changes of representative of the homology class $[S]$, so the charge is both geometrical (it depends on the curvature of $g_{ab}$) and topological (it depends on the homology groups of $M$).

The above result is the Riemannian analog of a Lorentzian conservation law first found by Jordan, Ehlers, and Sachs \cite{JordanEhlersSachs}, and then reformulated in \cite{Aksteiner:2013rq} in terms of spin-lowering \cite[§6.4]{PR2} for the Kerr black hole, where the charge \eqref{charge} corresponds to the mass parameter and the conservation law is intuitively clear: 2-spheres surrounding the black hole are non-contractible, and the mass contained within any two of them is the same.
However, for gravitational instantons, the evaluation of \eqref{charge} and its interpretation are considerably more subtle, since the topology may be substantially different from the Kerr topology $\mathbb{R}^2\times S^2$, and it is not obvious how to find the 2-cycles (i.e. the 2-spheres in Kerr).
As we shall see, examples of \eqref{charge} include: the NUT-charge/mass of the ALF Taub-bolt instanton on $\mathbb{CP}^2\setminus \{\rm pt.\}$, the only free parameter of the ALE Eguchi-Hanson instanton on $T^{*}\mathbb{CP}^1$, the cosmological constant for the compact instantons $\mathbb{CP}^2$ and $\mathbb{CP}^2\#\overline{\mathbb{CP}}{}^2$, and the two parameters of the AF Chen-Teo instanton on $\mathbb{CP}^2\setminus S^1$. The fact that the two parameters of Chen-Teo are integrals of the K\"ahler form is to be contrasted with the Kerr instanton, where \eqref{charge} gives only the mass.

We shall prove that generic gravitational perturbations of Hermitian instantons, with or without cosmological constant $\lambda$ and with any asymptotic structure, admit a 2-form $\delta\hat\omega_{ab}$ (given by \eqref{pertclosedform2}) which is closed, $\d\:\delta\hat\omega=0$, and whose integral over 2-cycles measures the perturbation to \eqref{charge},
\begin{align}\label{perturbedcharge}
 \delta{Q}[S] = \int_{S}\delta\hat\omega.
\end{align}
Moreover, the result \eqref{perturbedcharge} is gauge invariant, i.e. invariant under infinitesimal diffeomorphisms; see eq. \eqref{gaugetr} below. 
The new conservation law \eqref{perturbedcharge} for gravitational perturbations of instantons also has a Lorentzian counterpart, which was first found by Fackerell in \cite{Fackerell} (cf. also \cite{Fayos}),
and then formulated in terms of spin-lowering in \cite{Aksteiner:2013rq} and used to show that the analog of \eqref{perturbedcharge} gives the linearized mass of a perturbed Kerr black hole \cite[Theorem 5.1]{Aksteiner:2013rq}.

The result \eqref{perturbedcharge} is valid for any asymptotic structure, and in this paper we shall not restrict ourselves to any boundary conditions. However, in the ALF case, it was conjectured \cite{Andersson:2024wtn} that the only possible gravitational perturbations are moduli variations. 
In a subsequent paper \cite{AA} we prove that this is indeed the case. 
Physically, this means that there are no higher-order quantum corrections to the instantons, i.e. the one-loop expansion remains valid.

\subsection*{Overview}
In section \ref{sec:preliminaries} we introduce some background material and set our notation and conventions.
In section \ref{sec:conservationlaw} we prove an identity for generic Riemannian 4-manifolds (Prop. \ref{prop:identityforms}), and use it to derive the closed 2-form $\delta\hat\omega_{ab}$ and the linearized charge \eqref{perturbedcharge}; see Theorem \ref{thm:closed2form}.
Section \ref{sec:Examples} contains many examples of the charge \eqref{charge} for different gravitational instantons (AF, ALF, ALE, compact).

\section{Preliminaries and notation}
\label{sec:preliminaries}

Throughout the article, $(M,g_{ab})$ denotes a four-dimensional, smooth, orientable Riemannian manifold, with metric signature $(++++)$, Levi-Civita connection $\nabla_{a}$, volume form $\epsilon_{abcd}$, and Hodge star operator $*$. We shall use, {\em locally}, the 2-spinor formalism \cite{PR1,PR2}, adapted to Riemannian signature. We do not assume a global spin structure. 

The (locally defined) unprimed and primed spinor bundles are $\mathbb{S}$ and $\mathbb{S}'$. The symplectic forms $\epsilon_{AB}$ and $\epsilon_{A'B'}$ and their inverses are used to raise and lower spinor indices. 
An unprimed spin dyad is $(o^A,\iota^A)$ with $\epsilon_{AB}o^{A}\iota^{B}=1$ and $\iota^{A}=o^{\dagger A}$, and a primed spin dyad is $(\alpha^{A'},\beta^{A'})$ with $\epsilon_{A'B'}\alpha^{A'}\beta^{B'}=1$ and $\beta^{A'}=\alpha^{\dagger A'}$. Defining the null tetrad $\ell^{a}=o^{A}\alpha^{A'}$, $n^{a}=\iota^{A}\beta^{A'}$, $m^{a}=o^{A}\beta^{A'}$, $\tilde{m}^{a}=\iota^{A}\alpha^{A'}$ (see e.g. \cite[(3.1.14)]{PR1}), we have $n^{a}=\bar{\ell}^{a}$ and $\tilde{m}^{a}=-\bar{m}^{a}$, and the non-trivial inner products are $\ell^a n_a=1=-m^a\tilde{m}_a$. The real and imaginary parts of the null tetrad define four real vectors $\{e^{a}_{\bf a}\}$, ${\bf a}=0,...,3$, as follows: $e^{a}_{0}=\frac{1}{\sqrt{2}}(\ell^{a}+n^{a})$, 
$e^{a}_{1}=i\frac{1}{\sqrt{2}}(\ell^{a} - n^{a})$, 
$e^{a}_{2}=\frac{1}{\sqrt{2}}(m^{a}-\tilde{m}^{a})$, 
$e^{a}_{3}=i\frac{1}{\sqrt{2}}(m^{a}+\tilde{m}^{a})$. 
(This is the analog of \cite[Eq. (3.1.20)]{PR1}.)
The dual basis, denoted $e_{a}^{\bf a}$, is defined by $e_{a}^{\bf a}e^{a}_{\bf b}=\delta^{\bf a}_{\bf b}$. A computation gives:
\begin{align*}
 \epsilon_{AB}\epsilon_{A'B'} = 2(\ell_{(a}n_{b)}-m_{(a}\tilde{m}_{b)}) 
 = +\delta_{\bf ab}e_{a}^{\bf a}e_{b}^{\bf b} = +g_{ab}.
\end{align*}
Thus the reality conditions on the null tetrad imposed above, which are in turn defined by the Riemannian spinor conjugation $\dagger$, give the correct signature $(++++)$ for the metric. (Note that an analogous calculation but with Lorentzian reality conditions gives signature $(+---)$, cf. footnote in \cite[p. 121]{PR1}.)

The space of 2-forms in $M$, $\Lambda^2$, splits under $*$ as $\Lambda^2=\Lambda^{2}_{+}\oplus\Lambda^{2}_{-}$, where $\Lambda^{2}_{\pm}$ corresponds to the eigenvalue $\pm 1$ of $*$. Elements of $\Lambda^{2}_{+}$ ($\Lambda^{2}_{-}$) are called self-dual (anti-self-dual) 2-forms. Our convention for the volume form is $\epsilon_{0123}=+1$, i.e. $\epsilon_{abcd}=4!e^{0}_{[a}e^{1}_{b}e^{2}_{c}e^{3}_{d]}$, or equivalently $\pmb{\epsilon}=+e^0 \wedge e^1 \wedge e^2 \wedge e^3$. With this convention, one can check that a basis of real self-dual 2-forms is:
\begin{subequations}\label{SD2forms}
\begin{align}
 & Z^{1}_{ab} = \tfrac{1}{\sqrt{2}}(e^0\wedge e^3 + e^1\wedge e^2)_{ab} = \tfrac{i}{\sqrt{2}}(o_Ao_B - \iota_A\iota_B)\epsilon_{A'B'}, \\
 & Z^{2}_{ab} = \tfrac{1}{\sqrt{2}}(e^0\wedge e^2 - e^1\wedge e^3)_{ab} = \tfrac{1}{\sqrt{2}}(o_Ao_B + \iota_A\iota_B)\epsilon_{A'B'}, \\
 & Z^{3}_{ab} = \tfrac{1}{\sqrt{2}}(e^0\wedge e^1 + e^2\wedge e^3)_{ab} = -\tfrac{i}{\sqrt{2}} (o_{A}\iota_{B}+\iota_{A}o_{B})\epsilon_{A'B'}. \label{Z3}
\end{align}
\end{subequations}
Self-dual 2-forms are, thus, of the form $\omega^{+}_{ab}=\psi_{AB}\epsilon_{A'B'}$, while anti-self-dual 2-forms are $\omega^{-}_{ab}=\varphi_{AB}\epsilon_{A'B'}$. In other words, $\Lambda^{2}_{+}\cong \mathbb{S}\odot\mathbb{S}$ and $\Lambda^{2}_{-}\cong \mathbb{S}'\odot\mathbb{S}'$. Note that {\em this convention is the opposite of \cite{PR1, PR2}}.

The 2-forms \eqref{SD2forms} give an orthonormal basis of $\Lambda^{2}_{+}$. In terms of spinors, this can be phrased in terms of the real `metric' on $\mathbb{S}^2=\mathbb{S}\odot\mathbb{S}$ given by
\begin{align}\label{metricforms}
I_{ABCD}=\epsilon_{A(C}\epsilon_{|B|D)}, 
\end{align}
cf. \cite[Eq. (8.3.2)]{PR2}, with inverse $I^{ABCD}$. 
Writing \eqref{SD2forms} as $Z^{i}_{ab}=\zeta^{i}_{AB}\epsilon_{A'B'}$ ($i=1,2,3$), the spinors $\zeta^{i}_{AB}$ are real, symmetric, and satisfy
\begin{align}
 I^{ABCD}\zeta^{i}_{AB}\zeta^{j}_{CD} = \delta^{ij}, \qquad i=1,2,3.
\end{align}
It also follows that $\delta_{ij}\zeta^{i}_{AB}\zeta^{j}_{CD}=I_{ABCD}$. Thus $\{\zeta^{i}_{AB}\}$ gives a real, orthonormal basis of $\mathbb{S}^2$.

We also note that the self-dual 2-forms $\omega^{i}_{ab}=\sqrt{2}Z^{i}_{ab}$ give (locally) an almost-hyper-Hermitian structure, i.e. they satisfy the quaternion algebra. In particular, each of $\omega^{i}_{ab}$ defines an almost-complex structure $J^{i}$ compatible with the metric, via $\omega^{i}=g(J^{i}\cdot,\cdot)$.

The Levi-Civita connection of $g_{ab}$ gives a connection in $\mathbb{S}^2$, also denoted $\nabla_{a}$. Since $\nabla_{a}:\Gamma(\mathbb{S}^2)\to\Gamma(T^{*}M\otimes\mathbb{S}^2)$, the covariant derivative of $\zeta^{i}_{BC}$ is  
\begin{align}\label{connectionform}
 \nabla_{a}\zeta^{i}_{BC} = \Gamma_{a}{}^{i}{}_{j}\zeta^{j}_{BC},
\end{align}
where the connection 1-form on the right side is $\Gamma_{a}{}^{i}{}_{j}=\zeta_{j}^{BC}\nabla_{a}\zeta^{i}_{BC}$, and $\zeta_{j}^{BC}=\zeta^{jBC}$. The non-trivial, independent components are 
\begin{subequations}
\begin{align}
\Gamma_{a}{}^{3}{}_{1} ={}& - (o^{B}\nabla_{a}o_{B} + \iota^{B}\nabla_{a}\iota_{B}), \label{Gamma31} \\
\Gamma_{a}{}^{3}{}_{2} ={}& i(o^{B}\nabla_{a}o_{B} - \iota^{B}\nabla_{a}\iota_{B}), \label{Gamma32} \\ 
\Gamma_{a}{}^{2}{}_{1} ={}& -2i \,\iota^{B}\nabla_{a}o_{B}. \label{Gamma21}
\end{align}
\end{subequations}

Our convention for the Riemann tensor is 
\begin{align}\label{Riemann}
(\nabla_{a}\nabla_{b} - \nabla_{b}\nabla_{a})v^{d} = - R_{abc}{}^{d}v^{c}.
\end{align} 
We stress that {\em this is the opposite of the convention used by Penrose and Rindler \cite{PR1, PR2}} (compare to \cite[Eq. (4.2.31)]{PR1}). This means that the Riemann tensor here is minus the Riemann tensor in \cite{PR1, PR2}, and similarly the curvature spinors here are minus the curvature spinors in \cite{PR1, PR2}. With the convention \eqref{Riemann}, we have for any spinors $\kappa_{C},\chi_{C'}$:
\begin{subequations}\label{commutators}
\begin{align}
 \Box_{AB}\kappa_{C} ={}& +\Psi_{ABC}{}^{D}\kappa_{D} + \frac{R}{24}(\epsilon_{AC}\kappa_{B} + \epsilon_{BC}\kappa_{A}), \\
 \Box_{A'B'}\chi_{C'} ={}& +\tilde\Psi_{A'B'C'}{}^{D'}\chi_{D'} + \frac{R}{24}(\epsilon_{A'C'}\chi_{B'} + \epsilon_{B'C'}\chi_{A'}),
\end{align}
\end{subequations}
where $\Box_{AB}=\nabla_{A'(A}\nabla^{A'}_{B)}$ and $\Box_{A'B'}=\nabla_{A(A'}\nabla^{A}_{B')}$ (compare to \cite[Eq. (4.9.15)]{PR1}). In \eqref{commutators}, $R=g^{ac}R_{abc}{}^{b}$ is the scalar curvature, and $\Psi_{ABCD}$ and $\tilde\Psi_{A'B'C'D'}$ are the Weyl curvature spinors. The Weyl tensor is $W_{abcd} = W^{+}_{abcd} + W^{-}_{abcd}$, where
\begin{align}
W^{+}_{abcd} = \Psi_{ABCD}\epsilon_{A'B'}\epsilon_{C'D'}, \qquad 
W^{-}_{abcd} =\tilde\Psi_{A'B'C'D'}\epsilon_{AB}\epsilon_{CD}
\end{align}
are, respectively, the self-dual and anti-self-dual parts (according to our conventions; cf. below \eqref{SD2forms}).

\section{Charges and gravitational perturbations}
\label{sec:conservationlaw}

\subsection{Conformal geometry}
\label{sec:CGHP}

The calculations in this section are most easily performed using a conformally invariant GHP connection $\mathcal{C}_{a}$ \cite{Araneda:2021wcd}. Here we summarize the necessary background. 

By a conformal transformation we mean a rescaling of the metric
\begin{align}\label{conftransf}
 g_{ab} \to \hat{g}_{ab} = \Omega^{2}g_{ab},
\end{align}
where $\Omega$ is a positive scalar field. Consider one of the real self-dual 2-forms in \eqref{SD2forms}, say $Z_{ab}\equiv Z^{3}_{ab}$. Then $J^{a}{}_{b}=\sqrt{2}Z_{bc}g^{ca}$ is an almost-complex structure compatible with $g_{ab}$. Under a conformal change \eqref{conftransf}, $\hat{Z}_{ab} = \Omega^{2}Z_{ab}$, so $J^{a}{}_{b}$ is conformally invariant.
The Lee form $f=f_{a}\d{x}^a$ (see e.g. \cite{Gover}) is defined by 
\begin{align}\label{idLeeform}
 \d{Z} = -2 f\wedge Z.
\end{align}
(In indices, $\partial_{[a}Z_{bc]}=-2f_{[a}Z_{bc]}$.)
The principal spinors of $Z_{ab}$ are $o_A,\iota_A$. Note that $J^{a}{}_{b}$ is invariant under the ``GHP'' transformation
\begin{align}\label{GHPtr}
o_A \to z o_A, \qquad \iota_A \to z^{-1}\iota_A, 
\end{align}
where $z\in U(1)$. The almost-complex structure is also invariant under a conformal transformation
\begin{align}\label{CTspinors} 
g_{ab} \to \Omega^{2}g_{ab}, \qquad
o_A\to\Omega^{1/2}o_A, \qquad \iota_A\to\Omega^{1/2}\iota_A.
\end{align}
A spinor/tensor field $\varphi^{A...A'...}_{B...B'...}$ which transforms under \eqref{GHPtr}-\eqref{CTspinors} as $\varphi^{A...A'...}_{B...B'...} \to \Omega^{w}z^{p}\varphi^{A...A'...}_{B...B'...}$ is said to have conformal weight $w$ and GHP weight $p$. These `weighted' fields $\varphi^{A...A'...}_{B...B'...}$ are to be thought of as sections of vector bundles over $M$. A linear connection $\mathcal{C}_{AA'}$ on these bundles was constructed in \cite{Araneda:2021wcd}. If $\phi_{AB...K}$ is a totally symmetric spinor field with conformal weight $w$ and GHP weight $p$, we have:
\begin{equation}\label{defC}
\begin{aligned}
\mathcal{C}_{AA'}\phi_{BC...L} ={}& \nabla_{AA'}\phi_{BC...L} + w f_{AA'}\phi_{BC...L} + p P_{AA'}\phi_{BC...L} \\
& - f_{BA'}\phi_{AC...L} - f_{CA'}\phi_{BA...L} - ... - f_{LA'}\phi_{BC...A},
\end{aligned}
\end{equation}
where the 1-forms $f_{a}$ and $P_{a}$ are
\begin{align}\label{fP}
 f_{a} = -\frac{1}{2} J^{c}{}_{b}\nabla_{c}J^{b}{}_{a}, 
 \qquad P_a = \frac{i}{2}(\Gamma_{a}{}^{2}{}_{1} - J^{b}{}_{a}f_{b}),
\end{align}
and $\Gamma_{a}{}^{2}{}_{1}$ was defined in \eqref{Gamma21}. The 1-form $f_{a}$ is the Lee form, cf. \eqref{idLeeform}. 
\begin{remark}\label{remark:CGHP}
We shall later need the following facts:
\begin{enumerate}
\item If $\phi_{AB...K}$ has $w=-1$, $p=0$, then
\begin{align}
\mathcal{C}_{A'}{}^{B}\phi_{AB...K} = \nabla_{A'}{}^{B}\phi_{AB...K}, \label{cwzero}
\end{align}
whereas for $w=0$, $p=0$:
\begin{align}
\mathcal{C}_{A'}{}^{B}\phi_{AB...K} = \nabla_{A'}{}^{B}\phi_{AB...K}+f_{A'}{}^{B}\phi_{AB...K}. \label{cwminusone}
\end{align}
\item $o_A$ and $\iota_A$ have weights $(w=\frac{1}{2},p=1)$ and $(w=\frac{1}{2},p=-1)$ respectively, so a calculation using \eqref{defC} gives:
\begin{subequations}\label{Cdyad}
\begin{align}
 & \mathcal{C}_{AA'}o_{B} = \sigma^{0}_{A'}\iota_A\iota_B, \qquad 
 \sigma^{0}_{A'} = o^Ao^B\nabla_{AA'}o_B, \\
 & \mathcal{C}_{AA'}\iota_{B} = \sigma^{1}_{A'}o_Ao_B, \qquad
 \sigma^{1}_{A'} = \iota^A\iota^B\nabla_{AA'}\iota_{B}.
\end{align}
\end{subequations}
Note that $\sigma^{1}_{A'}=(\sigma^{0}_{A'})^{\dagger}$.
\item\label{itemHermitian} $(M,g_{ab})$ is Hermitian with respect to $J^{a}{}_{b}$ if and only if $\sigma^{0}_{A'}=0$. 
\end{enumerate}
\end{remark}

If $(M,g_{ab})$ is Einstein\footnote{All of our results for Einstein manifolds apply in particular in the Ricci-flat case. We shall not mention this explicitly in the following.} and Hermitian with respect to $J^{a}{}_{b}$, with $W^{+}_{abcd}\neq0$, then the Goldberg-Sachs theorem implies that $Z_{ab}$ is an eigenform of $W^{+}{}_{ab}{}^{cd}$:
\begin{align}\label{backgroundcurvature}
 \tfrac{1}{2}W^{+}{}_{ab}{}^{cd} Z_{cd} = -2\Psi_2 Z_{ab},
\end{align}
where $\Psi_2 = -\frac{1}{8}W^{+}_{abcd}Z^{ab}Z^{cd}$. In this case, Bianchi identities imply that
\begin{align}\label{LeeformEH}
 f_{a} = \Psi_{2}^{-1/3} \nabla_{a} \Psi_{2}^{1/3}.
\end{align}
Since $\hat{g}_{ab}$ is also Hermitian, and the Lee form is \eqref{LeeformEH},
it follows that $\hat{g}_{ab}=\Psi_2^{2/3} g_{ab}$ is K\"ahler, and, defining $\omega_{ab}=\sqrt{2}Z_{ab}$, the (closed) K\"ahler form is 
\begin{align}\label{kahlerform}
\hat\omega_{ab} = \Psi_2^{2/3}\omega_{ab}.
\end{align}

\subsection{Charges for Hermitian instantons}
\label{sec:IdentityForCharges}

We shall extensively use the self-dual 2-form $Z^{3}_{ab}$ and spinor field $\zeta^{3}_{AB}$ in \eqref{Z3}, so it is convenient to denote
\begin{align}\label{omega}
 \zeta_{AB} \equiv \zeta^{3}_{AB}, \qquad Z_{ab} \equiv Z^{3}_{ab}.
\end{align}
\begin{prop}\label{prop:identityforms}
Let $(M,g_{ab})$ be a Riemannian manifold. Let $Z_{ab}=\zeta_{AB}\epsilon_{A'B'}$ be a real self-dual 2-form, with $\zeta_{AB}=-i\sqrt{2}o_{(A}\iota_{B)}$, and let $f_{AA'}$ be the associated Lee form as in \eqref{idLeeform}. Define 
\begin{align}
 \psi_{AB} = \Psi_{AB}{}^{CD}\zeta_{CD}.
\end{align}
Then the following identity holds:
\begin{align}\label{identityLMspinors}
 \nabla_{A'}{}^{B}\psi_{AB} = f_{A'}{}^{B}\psi_{AB} + S_{AA'} + K_{AA'}, 
\end{align}
where 
\begin{align}
 S_{AA'} ={}& \zeta^{CD}\nabla_{A'}{}^{B}\Psi_{ABCD}, \label{bianchiterm} \\
 K_{AA'} ={}& i\sqrt{2} \, \Psi_{ABCD}(\sigma^{0}_{A'}\iota^{B}\iota^{C}\iota^{D}+\sigma^{1}_{A'}o^{B}o^{C}o^{D}), \label{quadratic}
\end{align}
and $\sigma^{0}_{A'}, \sigma^{1}_{A'}$ are defined in \eqref{Cdyad}.
\end{prop}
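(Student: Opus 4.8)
The plan is to apply the Leibniz rule to $\nabla_{A'}{}^{B}\psi_{AB}=\nabla_{A'}{}^{B}\big(\Psi_{AB}{}^{CD}\zeta_{CD}\big)$ and then to evaluate the derivative of $\zeta_{CD}$ through the conformal GHP connection $\mathcal{C}_{AA'}$, whose action on the dyad collapses to the two spin coefficients $\sigma^{0}_{A'},\sigma^{1}_{A'}$ via \eqref{Cdyad}. First I would split
\begin{align*}
\nabla_{A'}{}^{B}\psi_{AB}=\big(\nabla_{A'}{}^{B}\Psi_{AB}{}^{CD}\big)\zeta_{CD}+\Psi_{AB}{}^{CD}\,\nabla_{A'}{}^{B}\zeta_{CD}.
\end{align*}
Raising the index pair $CD$ onto $\zeta$ identifies the first summand with $S_{AA'}$ of \eqref{bianchiterm}, so the entire nontrivial content of the statement is carried by the second summand, which I must show equals $f_{A'}{}^{B}\psi_{AB}+K_{AA'}$.

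Next I would compute $\nabla_{A'}{}^{B}\zeta_{CD}$. Since $o_{A},\iota_{A}$ carry weights $(\tfrac12,1)$ and $(\tfrac12,-1)$, the spinor $\zeta_{CD}=-i\sqrt{2}\,o_{(C}\iota_{D)}$ has conformal weight $w=1$ and GHP weight $p=0$. Hence \eqref{defC} yields
\begin{align*}
\nabla_{AA'}\zeta_{CD}=\mathcal{C}_{AA'}\zeta_{CD}-f_{AA'}\zeta_{CD}+f_{CA'}\zeta_{AD}+f_{DA'}\zeta_{CA},
\end{align*}
where the $P_{a}$ term drops out precisely because $p=0$. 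Applying the Leibniz rule for $\mathcal{C}$ together with \eqref{Cdyad} gives the clean, Lee-form-free expression $\mathcal{C}_{AA'}\zeta_{CD}=-i\sqrt{2}\,\big(\sigma^{0}_{A'}\iota_{A}\iota_{C}\iota_{D}+\sigma^{1}_{A'}o_{A}o_{C}o_{D}\big)$.

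Finally I would contract the displayed expression for $\nabla_{AA'}\zeta_{CD}$ with $\Psi_{AB}{}^{CD}$ after raising the derivative index $B$. The $\mathcal{C}\zeta$ piece reproduces $K_{AA'}$ of \eqref{quadratic} (with the overall sign fixed by the conventions of \eqref{Cdyad}). For the three Lee-form terms: the $-f_{AA'}\zeta_{CD}$ contribution contracts to $-f_{A'}{}^{B}\psi_{AB}$, while the two index-swapped terms combine, using the total symmetry of $\Psi_{ABCD}$ together with the identity $\Psi_{ABCD}\zeta^{BD}=\psi_{AC}$, into $+2f_{A'}{}^{B}\psi_{AB}$, for a net $+f_{A'}{}^{B}\psi_{AB}$. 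Assembling the pieces gives \eqref{identityLMspinors}.

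I expect the last bookkeeping step to be the main obstacle: one must track the conformal and GHP weights carefully to know which terms of \eqref{defC} survive, and then show that the several a priori independent Lee-form contributions collapse to the single term $f_{A'}{}^{B}\psi_{AB}$ with exactly the right coefficient and no leftover pieces. The key simplification that makes this work is that contracting the totally symmetric $\Psi$ against $\zeta$ on \emph{any} two of its indices returns the same spinor $\psi$, which is what forces the three index distributions to reduce to one; the analogous care with signs is needed to land on the stated form of $K_{AA'}$.
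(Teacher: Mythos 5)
Your proposal is correct and follows essentially the same route as the paper: both rest on the conformal-GHP connection, the weight bookkeeping for $\zeta_{AB}$, $\Psi_{ABCD}$, $\psi_{AB}$, and the dyad derivatives \eqref{Cdyad}. The only difference is organizational --- the paper first converts the divergence $\nabla_{A'}{}^{B}\psi_{AB}$ into $\mathcal{C}_{A'}{}^{B}\psi_{AB}$ using $w(\psi_{AB})=-1$ (via \eqref{cwzero}, so the index-swap Lee-form terms of \eqref{defC} never appear explicitly), whereas you apply the Leibniz rule in $\nabla$ and then collapse the three Lee-form contributions by hand using the total symmetry of $\Psi_{ABCD}$; both orderings land on the same net term $f_{A'}{}^{B}\psi_{AB}$ and the same $K_{AA'}$ up to the convention-dependent sign you flagged.
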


\begin{remark}
The tensor version of \eqref{identityLMspinors} is 
\begin{align}\label{identityLMtensors}
 \d \Sigma = f \wedge \Sigma + *S + *K.
\end{align}
where we defined the self-dual 2-form
\begin{align}\label{mass2form}
 \Sigma_{ab} :=  \Psi_{AB}{}^{CD}\zeta_{CD}\epsilon_{A'B'} 
 = \tfrac{1}{2}W^{+}{}_{ab}{}^{cd}Z_{cd}.
\end{align}
\end{remark}

\begin{proof}
The proof of \eqref{identityLMspinors} is straightforward if one uses the conformal-GHP connection from section \ref{sec:CGHP}.
Since $\zeta_{AB}$ has conformal weight $w=1$, then $\zeta^{AB}$ has $w=-1$, and since $\Psi_{ABCD}$ has $w=0$, then $\psi_{AB}$ has $w=-1$. Thus:
\begin{align*}
 \nabla_{A'}{}^{B}\psi_{AB} ={}& \mathcal{C}_{A'}{}^{B}\psi_{AB} \\
 ={}& (\mathcal{C}_{A'}{}^{B}\Psi_{ABCD})\zeta^{CD} 
 + \Psi_{ABCD}\mathcal{C}_{A'}{}^{B}\zeta^{CD} \\
 ={}& (\nabla_{A'}{}^{B}\Psi_{ABCD}+f_{A'}{}^{B}\Psi_{ABCD})\zeta^{CD} 
 + i\sqrt{2} \, \Psi_{ABCD}\mathcal{C}_{A'}{}^{B}(o^C\iota^D) \\
 ={}& f_{A'}{}^{B}\psi_{AB} +\zeta^{CD}\nabla_{A'}{}^{B}\Psi_{ABCD}+ K_{AA'},
\end{align*}
where in the first line we used \eqref{cwzero}, in the third we used \eqref{cwminusone}, and in the fourth we used \eqref{Cdyad} and the definition \eqref{quadratic}.
\end{proof}

If $(M,g_{ab})$ is Einstein, then the Weyl spinor satisfies 
\begin{align}\label{DivFreeWeyl}
 \nabla^{AA'}\Psi_{ABCD} = 0,
\end{align}
so the term with $S$ in \eqref{identityLMtensors} vanishes. Suppose that, in addition, $(M,g_{ab})$ is Hermitian, with complex structure $J^{a}{}_{b}=\sqrt{2}Z_{bc}g^{ca}$ (so that the principal spinors are $o_A, \iota_A$, recall \eqref{omega} and \eqref{Z3}). Then from Remark \ref{remark:CGHP}, we have $\sigma^{0}_{A'}=0$ and the term with $*K$ in \eqref{identityLMtensors} vanishes. Recalling that for an Einstein-Hermitian geometry with $\Psi_2\neq0$, the Lee form is \eqref{LeeformEH}, the result \eqref{identityLMtensors} reduces to 
\begin{align}\label{closed2f}
\d\left[ \Psi_2^{-1/3}\Sigma \right] = 0.
\end{align}

One way of interpreting this identity is in terms of spin-lowering via Killing spinors \cite[§6.4]{PR2}. We have $\Psi_2^{-1/3}\Sigma_{ab} \propto \Psi_{ABCD}K^{CD}\epsilon_{A'B'}$, where $K_{AB} = \Psi_2^{-1/3} o_{(A}\iota_{B)}$ is a valence-2 Killing spinor, $\nabla_{A'(A}K_{BC)}=0$ (cf. \cite[Eq. (6.7.15)]{PR2}). Then \eqref{closed2f} is the statement that 
\begin{align}\label{charge2form}
 \Psi_{ABCD}K^{CD}\epsilon_{A'B'}
\end{align}
is an anti-self-dual Maxwell field. As explained by Penrose and Rindler in \cite[§6.4]{PR2}, 
this construction can be applied to linearized gravity on Minkowski space-time by replacing $\Psi_{ABCD}$ with a spin 2 zero-rest-mass field. Then one can associate a charge to this field by integrating \eqref{charge2form} on a closed 2-surface, where, since Minkowski space-time is topologically trivial, the 2-surface must surround a region of matter sources in order for the charge to be non-vanishing. 

For our Einstein Hermitian geometry $(M,g_{ab})$, an analogous non-vanishing charge requires the existence of non-contractible 2-surfaces in $M$. Additionally, we can also interpret \eqref{closed2f} in terms of complex structures: 
using \eqref{mass2form} and \eqref{backgroundcurvature}, 
eq. \eqref{closed2f} is a reflection of the fact that $(M,g_{ab})$ is conformally K\"ahler, with the K\"ahler form given by \eqref{kahlerform}.
Combining the two interpretations above, we can then define a charge associated to every 2-cycle $S$ in $M$, by the formula \eqref{charge}.

\subsection{Linearized charges}
\label{sec:perturbationsLM}

Given a smooth one-parameter family of tensor fields $T(s)$ in $M$, we define the $k$-jets
\begin{align}
\delta^{k} T = \left. \left[\frac{\d^{k} }{\d s^{k}}T(s)\right] \right|_{s=0}.
\end{align}
The zeroth jet $T= T(0)$ is the background value of $T(s)$, and the first jet $\delta{T}$ is the linearization of $T(s)$ at $T$. We recall that, under an infinitesimal diffeomorphism generated by an arbitrary vector field $\xi^a$, the linearization $\delta{T}$ transforms as 
\begin{align}\label{gaugetr}
 \delta{T} \to \delta{T} + \pounds_{\xi}T,
\end{align}
which in this context is referred to as a gauge transformation. 
We say that the linearization $\delta{T}$ is gauge invariant if it is invariant under \eqref{gaugetr} for all $\xi^{a}$. This holds, in particular, if $T=0$.

\begin{thm}\label{thm:closed2form}
Let $g_{ab}(s)$ be a one-parameter family of Riemannian metrics in $M$, such that, for all $s$,
\begin{align}
 \nabla^{(s)\, a}W^{+}_{abcd}(s)=0. \label{divweyl}
\end{align}
Let $Z_{ab}(s)$ be a real self-dual 2-form, $Z_{ab}(s)Z^{ab}(s)=2$, and define 
\begin{align}
\Sigma_{ab}(s) :={}& \frac{1}{2}W^{+}{}_{ab}{}^{cd}(s)Z_{cd}(s), \label{Sigmas} \\
\Psi_{2}(s) :={}& -\frac{1}{8}W^{+}_{abcd}(s)Z^{ab}(s)Z^{cd}(s), \\
\hat\omega_{ab}(s) :={}& -\frac{\sqrt{2}}{3} \left[\Psi_2(s)^{-1/3}\Sigma_{ab}(s) - \Psi_2(s)^{2/3} Z_{ab}(s) \right]. \label{KFepsilon}
\end{align}
Suppose the background metric $g_{ab}=g_{ab}(0)$ is Einstein and Hermitian, with (closed) self-dual K\"ahler form $\hat\omega_{ab}=\hat\omega_{ab}(0)$, and $\Psi_2=\Psi_2(0)\neq0$. Then:
\begin{align}\label{pertclosedform}
 \d \, \delta\hat\omega =0,
\end{align}
where $\delta\hat\omega$ is the 2-form
\begin{align}\label{pertclosedform2}
\delta\hat\omega_{ab} = -\frac{\sqrt{2}}{3}\left[ \Psi_2^{-1/3}\delta\Sigma_{ab} - \Psi_2^{2/3}\delta{Z}_{ab} \right].
\end{align}
Moreover, the result \eqref{pertclosedform} is gauge invariant.
\end{thm}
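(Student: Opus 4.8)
The plan is to show that $\delta\hat\omega$ is closed by differentiating the background identity \eqref{closed2f} along the family $g_{ab}(s)$, and then to verify gauge invariance separately. The key observation is that Proposition~\ref{prop:identityforms} holds for \emph{every} Riemannian metric, so at each parameter value $s$ we have, from \eqref{identityLMtensors} with $Z_{ab}(s)$ in place of $Z_{ab}$, the identity $\d\Sigma(s) = f(s)\wedge\Sigma(s) + *S(s) + *K(s)$, where $S,K$ are the terms \eqref{bianchiterm}--\eqref{quadratic} built from the $s$-dependent geometry. The hypothesis \eqref{divweyl} forces $S(s)=0$ for all $s$, so $\delta{S}=0$ automatically. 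The strategy is therefore to construct, at the level of the full family, a closed $s$-dependent $2$-form whose $s$-derivative at $s=0$ is $\delta\hat\omega$, exploiting that the background is Einstein--Hermitian so that the background value of $*K$ vanishes.

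Concretely, I would first verify that $\hat\omega_{ab}(s)$ as defined in \eqref{KFepsilon} is, at $s=0$, the background K\"ahler form: using \eqref{backgroundcurvature} one has $\Sigma_{ab}(0)=-2\Psi_2 Z_{ab}$, so the bracket collapses and $\hat\omega_{ab}(0)=\Psi_2^{2/3}\omega_{ab}$, matching \eqref{kahlerform}. The second, and main, step is to compute $\d\,\delta\hat\omega$ by linearizing \eqref{KFepsilon}. Applying $\delta$ to the tensor identity and using that $\d$ commutes with $\delta$, the closedness $\d\hat\omega(s)\equiv 0$ (true for all $s$ since each member of the family is conformally K\"ahler once \eqref{divweyl} holds, or more directly since the bracket in \eqref{KFepsilon} is engineered so that $\d$ of it reduces to $\d(\Psi_2^{-1/3}\Sigma)$ plus exact correction terms) does \emph{not} hold off the background, so I cannot simply differentiate $0$. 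Instead I expect the correct route is: write $\d\hat\omega(s) = $ (terms proportional to $*K(s)$ and to $\d$ of the $Z$-correction), show these vanish to zeroth order because the background is Hermitian ($\sigma^0_{A'}=0$, hence $K_{AA'}(0)=0$), and then argue that the linearization of an expression that is \emph{identically} the exterior derivative of a globally defined quantity must itself be exact-derivative-free, i.e. $\d\,\delta\hat\omega = \delta(\d\hat\omega)$. The point of the specific combination \eqref{KFepsilon} is that the $Z$-term is precisely what cancels the non-Hermitian contribution $*K(s)$ away from the background, so that $\d\hat\omega(s)=0$ holds for the whole family, and then $\d\,\delta\hat\omega = \delta(\d\hat\omega) = \delta(0) = 0$ follows immediately.

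The main obstacle is establishing that $\d\hat\omega(s)=0$ \emph{identically in $s$}, not just at $s=0$. This requires showing that the $-\Psi_2^{2/3}Z$ correction in \eqref{KFepsilon} is exactly designed to absorb the Lee-form and $*K$ terms in \eqref{identityLMtensors} for a general (non-Hermitian) metric in the family. I would verify this by computing $\d(\Psi_2^{-1/3}\Sigma(s))$ and $\d(\Psi_2^{2/3}Z(s))$ using \eqref{identityLMtensors} and the definition \eqref{idLeeform} of the Lee form as it depends on $Z(s)$, and checking that the combination cancels the unwanted $f\wedge\Sigma$ and $*K$ pieces. If instead $\d\hat\omega(s)$ fails to vanish off the background, the fallback is to linearize \eqref{identityLMtensors} directly at $s=0$: then $\delta S = 0$ by \eqref{divweyl}, the background $*K=0$ lets one control $\delta(*K)$, and $\delta(f\wedge\Sigma)$ combines with $\delta(\Psi_2^{-1/3})\Sigma$ through the Leibniz rule to reproduce exactly the exterior derivative of \eqref{pertclosedform2}.

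For gauge invariance, the argument is short: under \eqref{gaugetr} with generator $\xi^a$, each background tensor $T$ appearing in \eqref{pertclosedform2} shifts its linearization by $\pounds_\xi T$, so $\delta\hat\omega \to \delta\hat\omega + \pounds_\xi\hat\omega$. Since $\hat\omega$ is closed, Cartan's formula gives $\pounds_\xi\hat\omega = \d(\iota_\xi\hat\omega)$, which is exact; hence $\d\,\delta\hat\omega$ is unchanged, and the periods \eqref{perturbedcharge} over closed $2$-cycles $S$ are unchanged because $\int_S \d(\iota_\xi\hat\omega)=0$ by Stokes. I expect this paragraph to be routine; the only subtlety is confirming that the full $s$-dependent definition \eqref{KFepsilon} is built covariantly from $g_{ab}(s)$ and $Z_{ab}(s)$ alone, so that its linearization genuinely transforms by the Lie derivative of its background value under \eqref{gaugetr}.
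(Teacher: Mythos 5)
Your primary route --- establishing $\d\hat\omega(s)=0$ identically in $s$ and then concluding $\d\,\delta\hat\omega=\delta(\d\hat\omega)=0$ --- fails, and the result would be far stronger than the theorem if it held (every jet $\delta^k\hat\omega$ would then be closed, with no role left for the Einstein--Hermitian hypothesis on the background). Computing directly from \eqref{idLeeform} and \eqref{identityLMtensors}, with the $S$-term killed by \eqref{divweyl}, one finds for each $s$
\begin{align*}
-\tfrac{3}{\sqrt{2}}\,\d\hat\omega(s)
= \Psi_2^{-1/3}\bigl({*K} + f\wedge\Sigma - \tfrac{1}{3}\,\d\ln\Psi_2\wedge\Sigma\bigr)
+ \Psi_2^{2/3}\bigl(2f\wedge Z - \tfrac{2}{3}\,\d\ln\Psi_2\wedge Z\bigr),
\end{align*}
and this vanishes precisely when ${*K}=0$, $\Sigma=-2\Psi_2 Z$, and $f=\Psi_2^{-1/3}\d\Psi_2^{1/3}$ --- that is, when the Goldberg--Sachs alignment \eqref{backgroundcurvature} and the Lee-form identity \eqref{LeeformEH} hold, which are consequences of the Einstein \emph{and} Hermitian conditions imposed only at $s=0$. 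The hypothesis \eqref{divweyl} removes only the Bianchi term \eqref{bianchiterm}; it does not make $g_{ab}(s)$ Hermitian or Einstein, so for $s\neq 0$ the form $Z(s)$ is not an eigenform of $W^{+}(s)$, the Lee form is not exact in the above sense, ${*K}(s)\neq 0$, and $\d\hat\omega(s)\neq 0$ in general. The nontrivial content of the theorem is exactly that the \emph{first} jet of $\hat\omega(s)$ is closed by virtue of the background identities, not that the family is closed.

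Your fallback route is essentially the paper's actual proof (linearize $\d Z(s)=-2f(s)\wedge Z(s)$ and $\d\Sigma(s)=f(s)\wedge\Sigma(s)+({*K})(s)$ at $s=0$, then eliminate $\delta f$ using \eqref{backgroundcurvature} and \eqref{LeeformEH}), but as stated it has a gap at the one step that requires care: you say the vanishing of the background ${*K}$ ``lets one control $\delta({*K})$''. Vanishing of $K$ at $s=0$ alone does \emph{not} give $\delta({*K})=0$ --- it gives only gauge invariance of $\delta({*K})$ --- and a nonzero $\delta({*K})$ would enter $\d\,\delta\Sigma$ and destroy closedness. What is true, and what the paper uses, is the quadratic vanishing $({*K})(s)=O(s^2)$: in \eqref{quadratic} each term is a product of a factor $\sigma^{0}_{A'}$ or $\sigma^{1}_{A'}=(\sigma^{0}_{A'})^{\dagger}$, which vanishes at $s=0$ because the background is Hermitian, with a factor $\Psi_{ABCD}\iota^{B}\iota^{C}\iota^{D}$ or $\Psi_{ABCD}o^{B}o^{C}o^{D}$, which also vanishes at $s=0$ because the Einstein--Hermitian background is type D in the adapted dyad (only $\Psi_2\neq 0$ survives). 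Hence $\delta({*K})=0$ by the Leibniz rule. With this supplied, the rest of your fallback goes through and coincides with the paper's computation, as does your gauge-invariance argument: $\pounds_{\xi}\hat\omega=\d(\xi\lrcorner\,\hat\omega)$ because $\hat\omega$ is closed, so $\delta\hat\omega$ shifts by an exact form and both \eqref{pertclosedform} and the periods \eqref{perturbedcharge} are unchanged.
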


\begin{proof}
For all $s$, we have
\begin{align}
\d{Z}(s) ={}& -2 f(s)\wedge Z(s), \label{dSigmae}  \\
\d\Sigma(s) ={}& f(s)\wedge\Sigma(s) + (*K)(s), \label{identityLMtensors1}
\end{align}
where the first equality is simply the definition of the Lee form (recall \eqref{idLeeform}), and the second equality follows from \eqref{identityLMtensors} after using that \eqref{divweyl} holds for all $s$. Since the background is Einstein and Hermitian, then $(*K)(s)=O(s^2)$, so $\d(*K)/\d s \, |_{s=0}=0$. Thus, the linearization of \eqref{dSigmae}-\eqref{identityLMtensors1} is 
\begin{align*}
\d\, \delta Z ={}& -2\delta{f}\wedge Z - 2 f \wedge\delta Z, \\
\d\, \delta\Sigma ={}& \delta{f}\wedge\Sigma + f \wedge\delta\Sigma,
\end{align*}
where $Z=Z(0)$, $\Sigma=\Sigma(0)$ and $f=f(0)$. Using the background identities \eqref{LeeformEH} and \eqref{backgroundcurvature}, it follows that 
\begin{align*}
 \delta{f}\wedge\Sigma = -2\Psi_2\delta{f}\wedge Z 
 = \Psi_2(\d\delta Z+2f\wedge\delta Z) 
 = \Psi_2^{1/3}\d\left[ \Psi_2^{2/3} \delta Z \right],
\end{align*}
so
\begin{align*}
\d\delta\Sigma = \Psi_2^{1/3}\d\left[ \Psi_2^{2/3} \delta Z \right] 
+ \Psi_{2}^{-1/3}\d\Psi_2^{1/3} \wedge\delta\Sigma
\end{align*}
and therefore
\begin{align*}
\Psi_2^{1/3}\d\left[\Psi_2^{-1/3}\delta\Sigma\right] = \Psi_2^{1/3}\d\left[ \Psi_2^{2/3} \delta Z \right], 
\end{align*}
which gives \eqref{pertclosedform}.

Finally, consider a gauge transformation $\delta\hat\omega \to \delta\hat\omega + \pounds_{\xi}\hat\omega$. We have $\pounds_{\xi}\hat\omega = \d(\xi\lrcorner\,\hat\omega)$ since $\hat\omega$ is closed. Hence, $\delta\hat\omega \to \delta\hat\omega +  \d( \xi \lrcorner\,\hat\omega)$, so the result \eqref{pertclosedform} is gauge-invariant.
\end{proof}

\section{Examples} 
\label{sec:Examples}

In this section we shall analyze several examples of the charges \eqref{charge}, by computing the conformal K\"ahler structures, the 2-cycles, and then the periods \eqref{charge}, for both compact and non-compact instantons. For each of these examples, Theorem \ref{thm:closed2form} implies that generic gravitational perturbations also have a quasi-locally conserved charge, given by \eqref{perturbedcharge}.

A non-trivial issue is how to find the independent 2-cycles, for which we need the second homology group $H_2(M)$. To do this, we shall use the fact that all of the examples are toric. This implies that one can apply the rod structure formalism as in \cite{Chen:2010zu}, which allows to determine the topological properties of the instanton. For example, for a rod structure with $n$ turning points, the Euler characteristic $\chi(M)$ is equal to $n$. In the non-compact case, we see from \cite[Theorem 1]{Nilsson:2023ina} that $H_2(M)\cong\mathbb{Z}^{n-1}$.
In the compact case, we can use the fact that $\chi(M)$ is also the alternating sum of the Betti numbers $b_i(M)$ to deduce the number of independent 2-cycles.

Below we shall also use the terminology of `nuts' and `bolts', for which we refer to the paper by Gibbons and Hawking \cite{Gibbons:1979xm}.

\subsection{Ambi-K\"ahler examples}
The examples in this subsection are conformally K\"ahler for both orientations. In \cite{Apostolov:2013oza}, these geometries are referred to as {\em ambi-K\"ahler}. The complex structures are $\omega^{+}_{ab}=2io_{(A}\iota_{B)}\epsilon_{A'B'}$ and $\omega^{-}_{ab}=2i\tilde{o}_{(A'}\tilde\iota_{B')}\epsilon_{AB}$, and the K\"ahler forms are $\hat\omega^{\pm}_{ab}$. We denote the only non-trivial components of $W^{\pm}_{abcd}$ by $\Psi_{2}^{\pm}$. (Note $\Psi_{2}^{+}=\Psi_2$.)

\begin{example}[Kerr]\label{example:Kerr}
The AF Kerr instanton with parameters $m>0,a$ has the following local form of the metric: 
\begin{align}
 g ={}& \frac{\Delta}{\Sigma}(\d\tau-a\sin^2\theta\d\phi)^2+\frac{\sin^2\theta}{\Sigma}[a\d\tau+(r^2-a^2)\d\phi]^2+\frac{\Sigma}{\Delta}\d{r}^2+\Sigma\d\theta^2, \label{Kerr}
\end{align}
where $\Delta=r^2 - 2mr -a^2$, $\Sigma=r^2-a^2\cos^2\theta$. Both $W^{\pm}_{abcd}$ are type D, with 
\begin{align}\label{psi2kerr}
 \Psi_{2}^{\pm} = \frac{m}{(r \pm a\cos\theta)^3}.
\end{align}
The following 2-forms 
\begin{align}
 \omega^{\pm} = (\d\tau-a\sin^2\theta\d\phi)\wedge\d{r} 
  \pm \sin\theta\d\theta\wedge[a\d\tau+(r^2-a^2)\d\phi]
 \label{CSKerr}
\end{align}
define integrable almost-complex structures with opposite orientation. The K\"ahler forms are then given by $\hat\omega^{\pm}=(\Psi_2^{\pm})^{2/3}\omega^{\pm}$. 
The topology is $\mathbb{R}^2\times S^2$, and the rod structure has two turning points, two semi-infinite rods and one finite rod \cite[§4.4]{Chen:2010zu}. 
Since there are two turning points, we have $H_2(M)\cong\mathbb{Z}$ \cite[Thm. 1]{Nilsson:2023ina}, so there is only one independent 2-cycle $S$, corresponding to the bolt at $r=r_{+}$ of the Killing field $\partial_{\tau}+\Omega_{E}\partial_{\phi}$, where $r_{+}=m+\sqrt{m^2+a^2}$ and $\Omega_{E}=a/(2mr_{+})$. The associated charges \eqref{charge} are 
\begin{align}\label{chargeKerr}
\int_{S}\hat\omega^{\pm} =  \pm \, 4\pi \, m^{2/3}.
\end{align}
\end{example}

For our next two examples, we first recall that the Ricci-flat Taub-NUT metric, with parameters $m>0,n>0$ is
\begin{align}
g ={}& 4n^2\frac{\Delta}{\Sigma}(\d\psi + \cos\theta\d\phi)^2+\frac{\Sigma}{\Delta}\d{r}^2 + \Sigma(\d\theta^2+\sin^2\theta\d\phi^2), \label{TN} 
\end{align}
where $\Sigma=r^2-n^2$, $\Delta = r^2-2mr+n^2$. Both $W^{\pm}_{abcd}$ are generically type D,
\begin{align}\label{Psi2TNlambda}
 \Psi_{2}^{\pm} = \frac{m \pm n}{(r \pm n)^{3}}.
\end{align}
The 2-forms
\begin{align}
 \omega^{\pm} = 2n(\d\psi + \cos\theta\d\phi)\wedge\d{r} 
\pm \Sigma\sin\theta\d\theta\wedge\d\phi
\label{CSTN}
\end{align}
correspond to
integrable almost-complex structures $J^{\pm}$ with opposite orientation ($(J_{\pm})^{a}{}_{b}=\omega^{\pm}_{bc}g^{ca}$). These structures are {\em not} K\"ahler.

\begin{example}[Self-dual Taub-NUT]\label{example:SDTN}
The self-dual Taub-NUT instanton is obtained from \eqref{TN} when $m=n$. It is ALF, and the topology is $S^4\setminus\{\rm pt.\} \cong \mathbb{R}^4$. From \eqref{Psi2TNlambda}, we see that $\Psi^{-}_2=0$ and $\Psi^{+}_2=2n/(r+n)^{3}$, that is $W^{-}_{abcd}=0$ and $W^{+}_{abcd}\neq0$, hence the space is hyper-K\"ahler with respect to one orientation, i.e. there is a 2-sphere of K\"ahler structures, and type D with respect to the other. 
The 2-forms \eqref{CSTN} are however not K\"ahler; the geometry is not only hyper-K\"ahler but also strictly conformally K\"ahler w.r.t. both orientations \cite{a23}. The rod structure has one turning point and two semi-infinite rods \cite[§4.2]{Chen:2010zu}.
However, since the manifold is topologically trivial, there are no 2-cycles; thus the parameter $n$ does not correspond to a charge integral like \eqref{charge}. For a different construction involving 3-cycles, see \cite{Bossard:2008sw}.
\end{example}

\begin{example}[Taub-bolt]\label{example:Taub-bolt}
The Taub-bolt instanton is obtained from \eqref{TN} when $m=\frac{5}{4}n$. It is ALF, and the topology is $\mathbb{CP}^2\setminus\{ {\rm pt.} \}$. We have $\Psi_2^{-}=n/[4(r-n)^3]$, $\Psi_2^{+}=9n/[4(r+n)^3]$, so both $W^{\pm}_{abcd}$ remain type D. The rod structure has two turning points, two semi-infinite rods and one finite rod \cite[§4.7]{Chen:2010zu}. 
As in Kerr we have two turning points, so $H_2(M)\cong\mathbb{Z}$ \cite[Thm. 1]{Nilsson:2023ina} and there is one independent 2-cycle, 
which we denote by $S_{2n}$ and corresponds to the bolt of $\partial_{\psi}$ at $r=2n$. We then find:
\begin{align}
\int_{S_{2n}}\hat\omega^{\pm} = c^{\pm}\, n^{2/3}
\end{align}
where $c^{-}=-3\cdot 4^{1/3}\pi$, $c^{+}=3^{1/3}\cdot 4^{1/3}\pi$.
\end{example}

Examples \ref{example:Kerr}, \ref{example:SDTN} and \ref{example:Taub-bolt} are AF/ALF. Our construction also applies to the ALE case:

\begin{example}[Eguchi-Hanson]
The metric for the Eguchi-Hanson instanton is locally given by
\begin{align}\label{EH}
 g = f(r)\frac{r^2}{4}(\d\psi+\cos\theta\d\phi)^2 + \frac{\d r^2}{f(r)} + \frac{r^2}{4}(\d\theta^2+\sin^2\theta\d\phi^2), 
\end{align}
where $f(r)=1-(a/r)^4$, and $a$ is a real constant. The topology is $T^{*}\mathbb{CP}^1$, and the asymptotics is ALE. The metric is Ricci-flat, the self-dual Weyl tensor is $W^{+}_{abcd}=0$, and the anti-self-dual Weyl tensor $W^{-}_{abcd}$  is type D \footnote{As noted by Gibbons and Pope \cite{Gibbons:1979xn}, the Eguchi-Hanson metric \eqref{EH} can be obtained from \eqref{TN} by setting $m=n+a^4/(128 n^3)$, $r=m+\tilde{r}^2/(8n)$, and taking the limit $n\to\infty$ (and then redefining $\tilde{r}\to r$), see \cite[Eqs. (4.15)-(4.16)]{Gibbons:1979xn}. This procedure leads to \eqref{psi2EH} (i.e., $W^{-}_{abcd}\neq0$ and $W^{+}_{abcd}=0$, which is the opposite to self-dual Taub-NUT).}: 
\begin{align}\label{psi2EH}
 \Psi_2^{-} = \frac{a^4}{4 r^6}, \qquad \Psi_2^{+} = 0.
\end{align}
One can check that the following 2-forms
\begin{align*}
\omega^{\pm} = \frac{r}{2}(\d\psi+\cos\theta\d\phi)\wedge\d r 
\pm \frac{r^2}{4}\sin\theta\d\theta\wedge\d\phi
\end{align*}
give integrable almost-complex structures $J^{\pm}$ with opposite orientation, and that $\d\omega^{+}=0$, so $J^{+}$ is K\"ahler. However $\d\omega^{-}\neq0$, and this gives the conformally K\"ahler form $\hat\omega^{-}=(\Psi_2^{-})^{2/3}\omega^{-}$ needed to define the charge \eqref{charge}. 
The rod structure has two turning points, two semi-infinite rods and one finite rod \cite[§4.5]{Chen:2010zu}. 
As in previous cases, two turning points give $H_2(M)\cong\mathbb{Z}$ \cite[Thm. 1]{Nilsson:2023ina} so there is one independent 2-cycle, which we denote by $S_{a}$ and corresponds to the bolt of $\partial_{\psi}$ at $r=a$. We have
\begin{align}
 \int_{S_{a}}\hat\omega^{-} = -\pi\left(\tfrac{a}{4}\right)^{2/3} .
\end{align}
\end{example}

So far the examples are Ricci-flat, but we can also include a non-trivial cosmological constant $\lambda\neq0$. For $\lambda>0$ the instanton is necessarily compact, and Einstein-Hermitian (non-K\"ahler) compact 4-manifolds with $\lambda>0$ have been classified \cite{Lebrun2010}: these are the $\mathbb{CP}^2$ instanton \cite{Gibbons:1978zy} (with orientation opposite to the K\"ahler one), the Page metric on $\mathbb{CP}^2\#\overline{\mathbb{CP}}{}^2$ \cite{Page:1978vqj}, or the Chen-LeBrun-Weber metric on $\mathbb{CP}^2\#2\overline{\mathbb{CP}}{}^2$ \cite{ChenLebrunWeber}. Only the first two metrics are explicitly known.

\begin{example}[$\mathbb{CP}^2$]
The $\mathbb{CP}^2$ instanton \cite{Gibbons:1978zy} is 
\begin{align}\label{metricCP2}
g = \frac{r^2}{4f(r)^2}(\d\psi+\cos\theta\d\phi)^2 + \frac{\d{r}^2}{f(r)^2} 
+ \frac{r^2}{4f(r)}(\d\theta^2+\sin^2\theta\d\phi^2),
\end{align}
where $f(r)=1+\frac{\lambda}{6}r^2$. The metric is Einstein, with cosmological constant $\lambda$. One can check that the following 2-forms
\begin{align*}
\omega^{\pm} = \frac{r}{2f^2}(\d\psi+\cos\theta\d\phi)\wedge\d{r} 
\pm \frac{r^2}{4f}\sin\theta\d\theta\wedge\d\phi
\end{align*}
give integrable almost-complex structures with opposite orientation. Furthermore, $\d\omega^{+}=0$, $\d\omega^{-}\neq0$, and $\d\hat\omega^{-}=0$, where $\hat\omega^{-}=\frac{4f^2}{r^4}\omega^{-}$. Thus, \eqref{metricCP2} is K\"ahler on the self-dual side, and strictly conformally K\"ahler on the anti-self-dual side. We also have 
\begin{align}\label{psi2CP2}
 \Psi_2^{-} = 0, \qquad \Psi_2^{+} = \frac{\lambda}{3}.
\end{align}
The rod structure has three turning points \cite{Chen:2010zu}, at $r=0$ and at 
$(r,\theta)=(\infty,0),(\infty,\pi)$. Thus the Euler characteristic is $\chi=3$, and since the manifold is compact, this gives $b_2(M)=1$, so there is only one 2-cycle, corresponding to the set $r=\infty$, $0<\theta<\pi$, which is a bolt of $\partial_{\psi}$ that we denote $S_{\infty}$. Since $\d\hat\omega^{-}=0$, a charge like \eqref{charge} can be associated to $\hat\omega^{-}$; however, $\Psi_2^{-}=0$ implies that this would not be related to the gravitational field. For $\hat\omega^{+}=(\Psi_2^{+})^{2/3}\omega^{+}$, the charge is 
\begin{align}
\int_{S_{\infty}}\hat\omega^{+} = (2\cdot 3^{1/3} \pi) \, \lambda^{1/3}.
\end{align}
\end{example}

\begin{example}[Page]
The Page metric \cite{Page:1978vqj, Gibbons:1979xm, Page:2009dm} is given by 
\begin{equation}
\begin{aligned} 
g = \frac{3(1+\nu^2)}{\lambda} & \left[ 
\frac{f(r)\sin^2r}{4(3+\nu^2)^2}(\d\psi+\cos\theta\d\phi)^2
+ \frac{\d{r}^2}{f(r)}  \right. \\
& \left. + \frac{(1-\nu^2\cos^2r)}{(3+6\nu^2-\nu^4)}(\d\theta^2+\sin^2\theta\d\phi^2) \right]
\end{aligned}
\end{equation}
where $\nu$ is the positive root of $\nu^4 + 4\nu^3 - 6\nu^2 + 12\nu - 3 = 0$ ($\nu\approx 0.2817$), and
\begin{align*}
 f(r) = \frac{3-\nu^2-\nu^2(1+\nu^2)\cos^2r}{1-\nu^2\cos^2r}.
\end{align*}
The space is Einstein, with cosmological constant $\lambda$. 
Both Weyl tensors $W^{\pm}_{abcd}$ are type D, and a calculation shows that 
\begin{align}
 \Psi_{2}^{\pm} = \frac{\lambda}{3}\frac{(1-\nu^2)^2}{(1+\nu^2)}\frac{1}{(1 \pm \nu\cos r)^{3}}. 
\end{align}
One can check that the 2-forms
\begin{align*}
 \omega^{\pm} = \frac{3(1+\nu^2)}{\lambda(3+\nu^2)}\left[\frac{\sin r}{2}(\d\psi+\cos\theta\d\phi)\wedge\d r \pm \frac{(1-\nu^2\cos^2r)}{4\nu}\sin\theta\d\theta\wedge\d\phi \right]
\end{align*}
give oppositely oriented integrable almost-complex structures. The topology is $\mathbb{CP}^2\#\overline{\mathbb{CP}}{}^2$, and it corresponds to blowing up the nut of $\mathbb{CP}^2$ at $r=0$. 
The rod structure has four turning points \cite{Chen:2010zu} (at $(r,\theta)=(0,0),(0,\pi),(\pi,0),(\pi,\pi)$), so the Euler characteristic is 4.
This gives $b_2(M)=2$, hence there are two independent 2-cycles. 
These are represented by the two bolts of $\partial_{\psi}$ (cf. \cite{Page:2009dm}), which we denote by $S_{0},S_{\pi}$ and correspond to the sets $r=0,\pi$, $0<\theta<\pi$.
The charges \eqref{charge} are
\begin{align}
 \int_{S_{r_o}} \hat\omega^{\pm} = c^{\pm}_{r_o} \, \lambda^{-1/3}
\end{align}
where $r_o=0,\pi$, and 
\begin{align*}
c^{\pm}_{r_o} =  \pm 3^{1/3}\frac{(1+\nu^2)^{1/3}(1-\nu^2)^{4/3}}{\nu(3+\nu^2)}\frac{(1\mp\nu\cos r_o)}{(1\pm\nu\cos r_o)} \, \pi.
\end{align*}
\end{example}

\subsection{The Chen-Teo instanton}

The Chen-Teo family of ALF Ricci-flat metrics is \cite[Eq. (2.1)]{ChenTeo2}
\begin{align}\label{ChenTeo}
 g = \frac{(F\d\tau+G\d\phi)^2}{(x-y)HF} 
 + \frac{kH}{(x-y)^3}\left(\frac{\d{x}^2}{X} - \frac{\d{y}^2}{Y} 
 - \frac{XY}{kF}\d\phi^2 \right)
\end{align}
where $X=P(x)$, $Y=P(y)$, $F = y^2 X - x^2 Y$, and 
\begin{align}
\nonumber H ={}& (\nu x+y)[(\nu x -y)(a_1-a_3xy)-2(1-\nu)(a_0-a_4x^2y^2)], \\
\nonumber G = {}&(\nu^2a_0+2\nu a_3 y^3+a_4(2\nu-1)y^4)X + ((1-2\nu)a_0 -2\nu a_1 x-\nu^2 a_4 x^4)Y, \\
P(u) ={}& a_0 + a_1 u + a_2 u^2 + a_3 u^3 + a_4 u^4.
\label{ChenTeoFunctions}
\end{align}
Here, $a_0,...,a_4, k, \nu$ are real constants, with $\nu\in[-1,1]$. It was shown in \cite{Aksteiner:2021fae} that \eqref{ChenTeo} is one-sided type D, and thus conformally K\"ahler only on one side (which we take to be the self-dual side). The K\"ahler form is $\hat\omega = \Psi_2^{2/3} \omega$ (recall \eqref{kahlerform}), and to find an explicit expression we follow \cite{a24}. After replacing eq. (5.3) into eq. (2.10) in this reference, we get
\begin{equation}\label{CSChenTeo}
\begin{aligned}
 \omega = \frac{\sqrt{k}}{(x-y)^2} & \left\{ 
 (x\d{y}-y\d{x})\wedge\d\tau + \left(\frac{Gx}{F}+\frac{HyX}{F(x-y)} \right)\d{y}\wedge\d\phi \right. \\
 & \left. -\left(\frac{Gy}{F}+\frac{HxY}{F(x-y)} \right)\d{x} \wedge\d\phi \right\}
\end{aligned}
\end{equation}
The scalar $\Psi_2$ can be computed using \cite[Eq. (2.28)]{a24}; we find
\begin{align}\label{psi2ChenTeo}
 \Psi_2 = -\frac{(\nu+1)}{2k}\left(\frac{x-y}{\nu x+y}\right)^{3},
\end{align}
which coincides with the expression deduced from \cite[Eqs. (3.28) and (3.39)]{Aksteiner:2021fae}. The closed K\"ahler form is then $\hat\omega = \Psi_2^{2/3} \omega$.

The vector fields $\partial_{\tau}, \partial_{\phi}$ are Killing, and the solution is toric. The rod structure has three turning points, two finite rods and two semi-infinite rods \cite{ChenTeo2, Aksteiner:2021fae, Kunduri:2021xiv}. These can be conveniently described in terms of the four roots, $r_1,r_2,r_3,r_4$, of the polynomial $P$ in \eqref{ChenTeoFunctions}. 
Following Chen and Teo \cite[§3.3]{ChenTeo2}, we have
\begin{itemize}
\item $\mathcal{R}_1$: $x = r_2$, $r_1<y<r_2$, direction $\ell_1=(\alpha_1,\beta_1)$,
\item $\mathcal{R}_2$: $r_2<x<r_3$, $y=r_1$, direction $\ell_2=(\alpha_2,\beta_2)$,
\item $\mathcal{R}_3$: $x=r_3$, $r_1<y<r_2$, direction $\ell_3=(\alpha_3,\beta_3)$,
\item $\mathcal{R}_4$: $r_2<x<r_3$, $y=r_2$, direction $\ell_4=(\alpha_4,\beta_4)$,
\end{itemize}
where the notation $\ell_i=(\alpha_i,\beta_i)$ means $\ell_i = \alpha_i\partial_{\tau}+\beta_i\partial_{\phi}$, and the expressions for $\alpha_i,\beta_i$ can be found in \cite[Eqs. (3.14)-(3.15)]{ChenTeo2}. It is convenient to adapt the Killing coordinates to the rod vectors $\ell_1,\ell_2$ (cf. \cite{Aksteiner:2021fae}). This is done by defining $(\tilde\tau,\tilde\phi)$ via $\tau = \alpha_2\tilde\tau+\alpha_1\tilde\phi$, $\phi=\beta_2\tilde\tau+\beta_1\tilde\phi$, so that $\partial_{\tilde\phi}=\ell_1$ and $\partial_{\tilde\tau}=\ell_2$. The closed K\"ahler form $\hat\omega=\Psi_2^{2/3}\omega$ is then
\begin{equation}\label{KahlerFormChenTeo}
\begin{aligned}
 \hat\omega = \left(\tfrac{\nu+1}{2k}\right)^{2/3}\frac{\sqrt{k}}{(\nu x+y)^2} & \left\{ 
 \frac{1}{\kappa_2}\left[ \left(K_2[1]+\frac{G}{F}\right)x + \frac{HyX}{F(x-y)} \right]\d{y} \wedge \d\tilde\tau \right. \\
 & - \frac{1}{\kappa_2}\left[ \left(K_2[1]+\frac{G}{F}\right) y + \frac{HxY}{F(x-y)} \right]\d{x}\wedge \d\tilde\tau \\
 & + \frac{1}{\kappa_{1}}\left[ \left(K_1[1]+\frac{G}{F}\right) x + \frac{HyX}{F(x-y)} \right]\d{y}\wedge \d\tilde\phi \\
 & \left. - \frac{1}{\kappa_1}\left[ \left(K_1[1]+\frac{G}{F}\right) y + \frac{HxY}{F(x-y)} \right]\d{x}\wedge \d\tilde\phi \right\},
\end{aligned}
\end{equation}
where we used the notation $\alpha_{i}=K_{i}[1]/\kappa_{i}$ and $\beta_{i}=1/\kappa_{i}$ following \cite[Eqs. (3.14)-(3.15)]{ChenTeo2}. 

The family of ALF metrics \eqref{ChenTeo} has, in general, conical singularities. Chen and Teo showed in \cite[§4.3]{ChenTeo2} that, by imposing certain constraints on the parameters, one can obtain a smooth, AF, regular 4-manifold. This is the Chen-Teo {\it instanton} \cite{ChenTeo1}, which corresponds to setting the total NUT charge to zero and requiring the rod vectors to have closed orbits with $2\pi$ period and to satisfy $\ell_1=\ell_4=\pm\ell_2\pm\ell_3$, cf. \cite[Eq. (4.13)]{ChenTeo2} and \cite{Aksteiner:2021fae}. The solution to these constraints is given in \cite[Eq. (4.16)]{ChenTeo2}, from where one deduces that
\begin{align}
\nonumber & \nu \equiv -2\xi^2, \\
\nonumber & r_1=-4\xi^3(1-\xi), \quad
r_2=-\xi(1-2\xi+2\xi^2), \quad 
r_3=1-2\xi, \quad
r_4=\infty, \\
\nonumber & a_4=0, \quad a_3=1, \quad a_2=-1+3\xi-2\xi^2+6\xi^3-4\xi^4, \\
\nonumber  & a_1=-\xi+4\xi^2-10\xi^3+20\xi^4-20\xi^5+16\xi^6-8\xi^7, \\
 & a_0=4(1-\xi)(2\xi-1)(1-2\xi+2\xi^2)\xi^4. \label{ChenTeoInstanton}
\end{align}
The resulting family of metrics has 2 parameters, $(\xi,k)$. The parametrization \eqref{ChenTeoInstanton} gives $\alpha_1=\alpha_4$, $\beta_1=\beta_4$, $\alpha_3=-(\alpha_1+\alpha_2)$ $\beta_3=-(\beta_1+\beta_2)$. The rod vectors are then $\ell_1=\ell_4=\partial_{\tilde\phi}$, $\ell_2=\partial_{\tilde\tau}$, $\ell_3 = -(\partial_{\tilde\phi}+\partial_{\tilde\tau})$.

The topology is $\mathbb{CP}^2\setminus S^1$. Since there are three turning points, we have $H_2(M)\cong\mathbb{Z}^2$ \cite[Thm. 1]{Nilsson:2023ina}, so there are two independent 2-cycles, which we can take to be the finite rods $\mathcal{R}_2,\mathcal{R}_3$. Using the expression \eqref{KahlerFormChenTeo}, it is straightforward to integrate $\hat\omega$ over $\mathcal{R}_2,\mathcal{R}_3$.
Since the rods correspond to roots of $X,Y$, the terms with $H$ in \eqref{KahlerFormChenTeo} vanish on the rods, and also (from \eqref{ChenTeoFunctions}) $G/F$ becomes constant, so we find (using $a_4=0$, $a_3=1$)
\begin{align*}
\hat\omega \big|_{\mathcal{R}_2} ={}& -\left(\tfrac{\nu+1}{2k}\right)^{2/3}\tfrac{\sqrt{k}}{\kappa_1 r_1} \left( K_1[1] r_1^2 + \nu^2 a_0 + 2\nu r_1^3 \right)\frac{\d{x}\wedge\d\tilde\phi}{(\nu x + r_1)^2}, \\
\nonumber \hat\omega \big|_{\mathcal{R}_3} ={}& \left(\tfrac{\nu+1}{2k}\right)^{2/3} \tfrac{\sqrt{k}}{r_3} \left[ \left(\tfrac{K_2[1]}{\kappa_2} - \tfrac{K_{1}[1]}{\kappa_1}\right) r_3^2  \right. \\
& \left. 
- (\tfrac{1}{\kappa_2}-\tfrac{1}{\kappa_1})[(1-2\nu)a_0 - 2\nu a_1r_3 ] \right] \frac{\d{y}\wedge\d\tilde\tau}{(\nu r_3+y)^2}
\end{align*}
where for $\hat\omega \big|_{\mathcal{R}_3}$ we used the fact that $\tilde\tau+\tilde\phi$ is constant on $\mathcal{R}_3$. Integrating and replacing \eqref{ChenTeoInstanton}, we find the charges
\begin{align}
 \int_{\mathcal{R}_2}\hat\omega ={}& \frac{4\pi}{2^{2/3}}\left(\frac{k}{1-2\xi^2} \right)^{1/3}, \\
 \int_{\mathcal{R}_3}\hat\omega ={}& -\frac{16\pi}{2^{2/3}}\left(\frac{k}{1-2\xi^2} \right)^{1/3}\xi^2.
\end{align}
\begin{remark}\label{remark:twoparameters}
This result implies that {\em both} of the two parameters of the Chen-Teo instanton can be obtained as charge integrals of the K\"ahler form. As mentioned in the introduction, this is in contrast to the Kerr case, where the K\"ahler construction gives only one of the two parameters, namely the mass, see \eqref{chargeKerr}. We can understand this difference from the fact that there are two bolts in Chen-Teo, whereas Kerr has only one. 
In a sense, the Chen-Teo instanton is a ``double'' object, as suggested by the interpretation in \cite[§5]{ChenTeo2} in terms of two touching Kerr-NUTs.
\end{remark}

\subsection*{Acknowledgements}
The authors would like to thank the Isaac Newton Institute for Mathematical Sciences, Cambridge, for support and hospitality during the programme Twistor Theory where work on this paper was undertaken. This work was supported by EPSRC grant no EP/R014604/1. BA is supported by the ERC Consolidator/UKRI Frontier grant TwistorQFT EP/Z000157/1. The work of LA is partially supported by the National Natural Science Foundation of China, under Grant Number W2431012.

\end{document}